\documentclass{llncs}
\usepackage{amsmath,amssymb}
\usepackage[dvipdf]{graphicx}
\usepackage[dvipdf]{color}
\usepackage{algorithmic,algorithm}

\newcommand{\msize}[1]{{\left|#1\right|}}

\newcommand{\bfI}{{I}}

\newcommand{\ini}{0}
\newcommand{\tar}{r}

\newcommand{\Neiopen}[2]{N_{#1}(#2)}
\newcommand{\Neiclosed}[2]{N_{#1}[#2]}

\newcommand{\Ram}{{\sf Ramsey}}

\newcommand{\bfIp}{I^\prime}

\newcommand{\valbeta}{10k}
\newcommand{\valcli}{5k-2}

\newcommand{\bfIB}{\bfI^*}

\newcommand{\VR}{R}

\newcommand{\VN}{A}
\newcommand{\VB}{B}

\newcommand{\ON}{{\sf ON}}
\newcommand{\Nvec}[1]{\VN(#1)}

\newcommand{\cgraph}{\mathcal{C}}
\newcommand{\cvertex}{\mathcal{V}}
\newcommand{\cedge}{\mathcal{E}}

\newcounter{one}
\setcounter{one}{1}
\newcommand{\one}{{\rm \roman{one}}}
\newcounter{two}
\setcounter{two}{2}
\newcommand{\two}{{\rm \roman{two}}}
\newcounter{three}
\setcounter{three}{3}

\newcounter{four}
\setcounter{four}{4}

\newcounter{five}
\setcounter{five}{5}

\newenvironment{listing}[1]{%
        \begin{list}{*}{%
                 \settowidth{\labelwidth}{#1}%
                 \setlength{\leftmargin}{\labelwidth}%
                 \advance \leftmargin by 12pt
                   \setlength{\itemsep}{0pt}%
                   \setlength{\parsep}{0pt}%
                   \setlength{\topsep}{0pt}%
                   \setlength{\parskip}{0pt}%
}%
}{%
\end{list}}

\title{Fixed-Parameter Tractability \\ of Token Jumping on Planar Graphs}
\author{
Takehiro Ito\inst{1} 
\and 
Marcin Kami\'nski\inst{2} 
\and 
Hirotaka Ono\inst{3}  
}
\institute{%
        Graduate School of Information Sciences, 
		Tohoku University, \\
        Aoba-yama 6-6-05, Sendai, 980-8579, 
		Japan. \\
        \email{takehiro@ecei.tohoku.ac.jp}
\and
        Dept. of Mathematics, Computer Science and Mechanics, 
		University of Warsaw, \\
        Banacha 2, 02-097, Warsaw, 
		Poland. \\
    \email{mjk@mimuw.edu.pl}
\and
        Faculty of Economics, 
		Kyushu University,  \\
        Hakozaki 6-19-1, Higashi-ku, Fukuoka, 812-8581, 
		Japan. \\
	\email{hirotaka@econ.kyushu-u.ac.jp}
}

\begin{document}
\maketitle

\begin{abstract}
Suppose that we are given two independent sets $\bfI_{\ini}$ and $\bfI_{\tar}$ of a graph such that $\msize{\bfI_{\ini}}=\msize{\bfI_{\tar}}$, and imagine that a token is placed on each vertex in $\bfI_{\ini}$. 
The {\sc token jumping} problem is to determine whether there exists a sequence of independent sets of the same cardinality which transforms $\bfI_{\ini}$ into $\bfI_{\tar}$ so that each independent set in the sequence results from the previous one by moving exactly one token to another vertex. 
This problem is known to be PSPACE-complete even for planar graphs of maximum degree three, and W[1]-hard for general graphs when parameterized by the number of tokens. 
In this paper, we present a fixed-parameter algorithm for {\sc token jumping} on planar graphs, where the parameter is only the number of tokens. 
Furthermore, the algorithm can be modified so that it finds a shortest sequence for a yes-instance. 
The same scheme of the algorithms can be applied to a wider class of graphs which forbid a complete bipartite graph $K_{3,t}$ as a subgraph for a fixed integer $t\ge 3$.
\end{abstract}

	\section{Introduction}
	The {\sc token jumping} problem was introduced by Kami\'nski et al.~\cite{KaminskiMedvedevMilanic2012}, which can be seen as a ``dynamic'' version of independent sets in a graph. 
	An {\em independent set} of a graph $G$ is a set of vertices of $G$ in which no two vertices are adjacent. 
(See \figurename~\ref{fig:example}, which depicts six different independent sets of the same graph.) 
	Suppose that we are given two independent sets $\bfI_0$ and $\bfI_r$ of a graph $G = (V,E)$ such that $|\bfI_0| = |\bfI_r|$, and imagine that a token is placed on every vertex in $\bfI_0$. 
	Then, the {\sc token jumping} problem is to determine whether there exists a sequence $\langle \bfI_0, \bfI_1, \ldots, \bfI_\ell \rangle$ of independent sets of $G$ such that 
	\begin{listing}{aaa}
	\item[(a)] $\bfI_{\ell} = \bfI_r$, and $|\bfI_0| = |\bfI_1| = \cdots =|\bfI_{\ell}|$; and 
	\item[(b)] for each index $i \in \{1,2, \ldots, \ell\}$, $\bfI_i$ can be obtained from $\bfI_{i-1}$ by moving exactly one token on a vertex $u \in \bfI_{i-1}$ to another vertex $v \not\in \bfI_{i-1}$, and hence $\bfI_{i-1} \setminus \bfI_i = \{u\}$ and $\bfI_i \setminus \bfI_{i-1} = \{v\}$.  
	\end{listing}
	Such a sequence is called a {\em reconfiguration sequence} between $\bfI_0$ and $\bfI_r$. 
	Figure~\ref{fig:example} illustrates a reconfiguration sequence $\langle \bfI_0, \bfI_1,\ldots, \bfI_5 \rangle$ of independent sets, which transforms $\bfI_0$ into $\bfI_r = \bfI_5$;
therefore, the answer is ``YES'' for this instance. 
\begin{figure}[t]
\begin{center}
\includegraphics[width=0.6\textwidth]{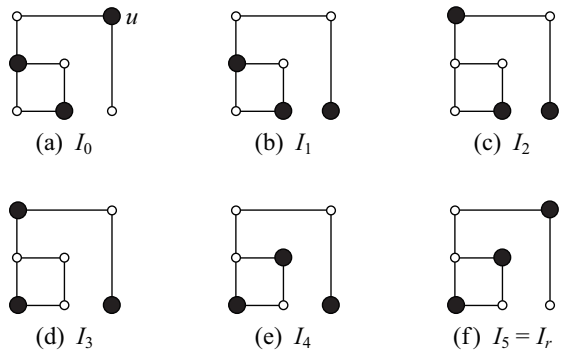}
\end{center}
\vspace{-1em}
\caption{A sequence $\langle \bfI_{\ini}, \bfI_1, \ldots, \bfI_5 \rangle$ of independent sets of the same graph, where the vertices in independent sets are depicted by large black circles (tokens).}
\label{fig:example}
\end{figure}

	Recently, similar settings of problems have been extensively studied in the framework of reconfiguration problems~\cite{IDHPSUU}, which arise when we wish to find a step-by-step transformation between two feasible solutions of a problem instance such that all intermediate solutions are also feasible and each step abides by a prescribed reconfiguration rule 
(i.e., an adjacency relation defined on feasible solutions of the original problem). 
	For example, the {\sc token jumping} problem can be seen as a reconfiguration problem for the (ordinary) independent set problem:
feasible solutions are defined to be all independent sets of the same cardinality in a graph, as in the condition~(a) above; 
and the reconfiguration rule is defined to be the condition~(b) above. 
	This reconfiguration framework has been applied to several well-studied combinatorial problems, including 
independent set~\cite{Bon14,BKW14,HearnDemaine2005,IDHPSUU,ItoKaminskiOnoSuzukiUeharaYamanaka2014,KaminskiMedvedevMilanic2012,MNRSS13,MNRW14,Wro14},
satisfiability~\cite{Kolaitis}, 
set cover, clique, matching~\cite{IDHPSUU}, 
vertex-coloring~\cite{BJLPP14,BC09}, 
list edge-coloring~\cite{IKD09},
(list) $L(2,1)$-labeling~\cite{IKOZ_isaac}, 
and so on.

	\subsection{Known and related results}

	The first reconfiguration problem for independent set, called {\sc token sliding}, was introduced by Hearn and Demaine~\cite{HearnDemaine2005} which employs another reconfiguration rule. 
	Indeed, there are three reconfiguration problems for independent set, called 
{\sc token jumping}~\cite{BKW14,ItoKaminskiOnoSuzukiUeharaYamanaka2014,KaminskiMedvedevMilanic2012,Wro14}, 
{\sc token sliding}~\cite{BC09,BKW14,HearnDemaine2005,KaminskiMedvedevMilanic2012,Wro14}, and 
{\sc token addition and removal}~\cite{Bon14,IDHPSUU,KaminskiMedvedevMilanic2012,MNRSS13,MNRW14,Wro14}. 
(See \cite{KaminskiMedvedevMilanic2012} for the definitions.)
	These are the most intensively studied reconfiguration problems, and hence
we here explain only the results strongly related to this paper; see the references above for the other results. 

	First, {\sc token jumping} (indeed, all three reconfiguration problems for independent set) is PSPACE-complete for planar graphs of maximum degree three~\cite{BC09,HearnDemaine2005,ItoKaminskiOnoSuzukiUeharaYamanaka2014}, for perfect graphs~\cite{KaminskiMedvedevMilanic2012}, and for bounded bandwidth graphs~\cite{Wro14}.

	Second, Kami\'nski et al.~\cite{KaminskiMedvedevMilanic2012} gave a linear-time algorithm for {\sc token jumping} on even-hole-free graphs. 
	Furthermore, their algorithm can find a reconfiguration sequence with the shortest length.

	Third, Ito et al.~\cite{ItoKaminskiOnoSuzukiUeharaYamanaka2014} proved that {\sc token jumping} is W[1]-hard for general graphs when parameterized only by the number of tokens. 
	Therefore, it is very unlikely that the problem admits a fixed-parameter algorithm for general graphs when the parameter is only the number of tokens. 
	They also gave a fixed-parameter algorithm for general graphs when parameterized by both the number of tokens and the maximum degree of graphs. 
	Their algorithm can be modified so that it finds a reconfiguration sequence with the shortest length. 

\if0
\medskip

	As the third known result above, reconfiguration problems have been studied recently under the parameterized complexity framework:
for example, 
{\sc token addition and removal}~\cite{MNRSS13,MNRW14},
reconfiguration problems for 
feedback vertex set~\cite{MNRSS13,MNRW14}, 
vertex-coloring~\cite{BM14,JKKPP14},
and so on. 
	However, almost all of the known results take the length $\ell$ of a reconfiguration sequence as the parameter. 
	This is a certainly natural choice, but unfortunately the length parameter $\ell$ affects the reconfigurability (i.e., the existence/nonexistence of a reconfiguration sequence).  
	For example, \figurename~\ref{fig:example} is a yes-instance whose shortest reconfiguration sequence is of length six. 
	However, it would be a no-instance (in the parameterized problem) if the length parameter $\ell$ is set by less than six. 

\fi

	\subsection{Our contribution}
	In this paper, we first give a fixed-parameter algorithm for {\sc token jumping} on planar graphs when parameterized only by the number of tokens. 
	Recall that {\sc token jumping} is PSPACE-complete for planar graphs of maximum degree three, and is W[1]-hard for general graph when the parameter is only the number of tokens. 

	Interestingly, our algorithm for planar graphs utilizes only the property that no planar graph contains a complete bipartite graph $K_{3,3}$ as a subgraph~\cite{BLS99}. 
	We show that the same scheme of the algorithm for planar graphs can be applied to a wider class of graphs which forbid a complete bipartite graph $K_{3,t}$ as a subgraph for a fixed integer $t\ge 3$.
(We call such graphs {\em $K_{3,t}$-forbidden graphs}.)

	In addition, the algorithm for $K_{3,t}$-forbidden graphs (and hence for planar graphs) can be modified so that it finds a reconfiguration sequence with the shortest length
for a yes-instance. 
	We note that the reconfiguration sequence in \figurename~\ref{fig:example} is shortest. 
	It is remarkable that the token on the vertex $u$ in \figurename~\ref{fig:example}(a) must make a ``detour'' to avoid violating the independence of tokens: it is moved twice even though $u \in \bfI_{\ini} \cap \bfI_{\tar}$. 
	Our algorithm can capture such detours for $K_{3,t}$-forbidden graphs. 


	\subsection{Strategy for fixed-parameter algorithms}

	We here explain two main ideas to develop a fixed-parameter algorithm for {\sc token jumping}; 
formal descriptions will be given later. 

	The first idea is to find a sufficiently large ``buffer space'' to move the tokens. 
	Namely, we first move all the tokens from $\bfI_0$ to the buffer space, and then move them from the buffer space to $\bfI_r$;
thus, the answer is ``YES'' if we can find such a buffer space. 
	Due to the usage, such a buffer space (a set of vertices) should be mutually independent and preferably not adjacent to any vertex in $\bfI_0 \cup \bfI_r$.  

	The second idea is to ``shrink the graph'' into a smaller one with preserving the reconfigurability (i.e., the existence/nonexistence of a reconfiguration sequence) between $\bfI_0$ and $\bfI_r$.
	This idea is based on the claim that, if the size of the graph is bounded by a function depending only on the parameter $k$, we can solve the problem in a brute-force manner in fixed-parameter running time. 
	Thus, it is useful to find such ``removable'' vertices in fixed-parameter running time, and shrink the graph so that the size of the resulting graph is bounded by a function of $k$.

	The $K_{3,t}$-forbiddance (and hence $K_{3,3}$-forbiddance) of graphs satisfies the two main ideas above at the same time:
it ensures that the graph has a sufficiently large independent sets, which may be used as a buffer space; and 
it characterizes removable vertices. 

\smallskip
Due to the page limitation, we omit some proofs from this extended abstract.

	\section{Preliminaries}

	In this paper, we assume without loss of generality that graphs are simple. 
	Let $G = (V,E)$ be a graph with vertex set $V$ and edge set $E$. 
	The {\em order} of $G$ is the number of vertices in $G$. 
	We say that a vertex $w$ in $G$ is a {\em neighbor} of a vertex $v$ if $\{v,w\} \in E$. 
	For a vertex $v$ in $G$, let $\Neiopen{G}{v} = \{w \in V \mid \{v, w\} \in E\}$.
	We also denote $\Neiopen{G}{v} \cup \{v\}$ by $\Neiclosed{G}{v}$. 
	For a vertex set $S \subseteq V$, let $\Neiopen{G}{S} = \bigcup_{v \in S} \Neiopen{G}{v}$ and $\Neiclosed{G}{S} = \bigcup_{v \in S} \Neiclosed{G}{v}$. 
\smallskip

	For a vertex set $S \subseteq V$ of a graph $G=(V,E)$, $G[S]$ denotes the subgraph {\em induced} by $S$, that is, $G[S]=(S,E[S])$ where $E[S]=\{ \{u,v\} \in E \mid \{u,v\} \subseteq S\}$. 
	A vertex set $S$ of $G$ is an {\em independent set} of $G$ if $G[S]$ contains no edge. 
	A subgraph $G^\prime$ of $G$ is called a {\em clique} if every pair of vertices in $G^\prime$ is joined by an edge;
we denote by $K_s$ a clique of order $s$. 
	For two positive integers $p$ and $q$, we denote by $K_{p,q}$ a complete bipartite graph with its bipartition of size $p$ and $q$. 

	\medskip

	
	A graph is {\em planar} if it can be embedded in the plane without any edge-crossing~\cite{BLS99}. 
	In Section~\ref{sec:algorithm}, our algorithms utilize an independent set of sufficiently large size in a graph, as a buffer space to move tokens. 
	As for independent sets of planar graphs, the following is known, though the original description is about the four-color theorem.
	\begin{proposition}[\cite{robertson1996efficiently}]
	\label{prop:4color}
	For a planar graph of order $n=4s$, there exists an independent set of size at least $s$, and it can be found in $O(n^2)$ time.  
	\end{proposition}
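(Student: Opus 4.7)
The plan is to derive this statement directly from the Four Color Theorem in its algorithmic form. By the efficient four-coloring algorithm of Robertson, Sanders, Seymour, and Thomas cited as \cite{robertson1996efficiently}, any planar graph on $n$ vertices admits a proper $4$-coloring that can be computed in $O(n^2)$ time. Such a coloring partitions $V$ into four color classes $C_1, C_2, C_3, C_4$, each of which is an independent set of $G$ since no two adjacent vertices receive the same color.

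Next I would apply the pigeonhole principle to this partition: since $|C_1| + |C_2| + |C_3| + |C_4| = n = 4s$, at least one class $C_j$ must satisfy $|C_j| \geq s$. Picking this largest color class yields the desired independent set of size at least $s$.

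For the running time, computing the $4$-coloring takes $O(n^2)$ time by the cited result, and scanning the four color classes to select the largest takes only $O(n)$ time, so the overall procedure fits within the $O(n^2)$ bound. The main (and only) obstacle here is simply that we rely on the Four Color Theorem for existence and on its algorithmic counterpart \cite{robertson1996efficiently} for the $O(n^2)$ construction; the rest of the argument is a one-line pigeonhole observation. Since no finer structural analysis of the graph is required, there is nothing deeper to grind through.
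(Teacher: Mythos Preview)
Your proposal is correct and matches exactly what the paper has in mind: the proposition is stated without proof and is attributed directly to \cite{robertson1996efficiently}, with the remark that ``the original description is about the four-color theorem.'' Your argument---compute a proper $4$-coloring in $O(n^2)$ time via the cited algorithm and take the largest color class by pigeonhole---is precisely the intended derivation, so nothing more is needed.
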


%

	It is well known as Kuratowski's theorem that a graph is planar if and only if it does not contain a subdivision of $K_5$ or $K_{3,3}$~\cite{BLS99}. 
	Therefore, any planar graph contains neither $K_5$ nor $K_{3,3}$ as a subgraph.
	In this paper, we extend our algorithm for planar graphs to a much larger class of graphs.
	For two positive integers $p$ and $q$, a graph is {\em $K_{p,q}$-forbidden} if it contains no $K_{p,q}$ as a subgraph.
	For example, any planar graph is $K_{3,3}$-forbidden. 
	It is important that any $K_{p,q}$-forbidden graph contains no clique $K_{p+q}$ of size $p+q$. 
	
	In our algorithm for $K_{3,t}$-forbidden graphs in Section~\ref{subsec:extension}, we use Ramsey's theorem, instead of Proposition~\ref{prop:4color}, to guarantee a sufficiently large independent set.
	Ramsey's theorem states that, for every pair of integers $a$ and $b$, there exists an integer $n$ such that any graph of order at least $n$ has an independent set of size $a$ or a clique of size $b$ 
(see~\cite{Graham:1990:RT:85271} for example). 
	The smallest number $n$ of vertices required to achieve this property is called a {\em Ramsey number}, denoted by $\Ram(a,b)$. 
	It is known that $\Ram(a,b)\le {a+b-2 \choose b-1}$~\cite{Graham:1990:RT:85271}.  
	Since any $K_{p,q}$-forbidden graph contains no $K_{p+q}$, we have the following proposition. 
	\begin{proposition}
	\label{prop:free}
	For integers $p, q$ and $s$, let $G$ be a $K_{p,q}$-forbidden graph of order at least $\Ram(s, p+q)$.
	Then, $G$ has an independent set of size at least $s$. 
	\end{proposition}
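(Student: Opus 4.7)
The plan is to derive this directly from Ramsey's theorem combined with the observation already highlighted in the paper, namely that any $K_{p,q}$-forbidden graph contains no clique on $p+q$ vertices. Since $\Ram(s, p+q)$ is, by definition, the threshold above which every graph must contain either an independent set of size $s$ or a clique of size $p+q$, the only work left is to rule out the clique alternative.

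First I would invoke Ramsey's theorem: because $G$ has order at least $\Ram(s, p+q)$, it contains either an independent set on $s$ vertices or a clique on $p+q$ vertices. Next I would show that the latter is impossible. Observe that $K_{p+q}$ contains $K_{p,q}$ as a (spanning) subgraph: partition the $p+q$ vertices of $K_{p+q}$ arbitrarily into two sets $X$ and $Y$ of sizes $p$ and $q$ respectively; since every pair of vertices is joined in $K_{p+q}$, in particular every vertex of $X$ is adjacent to every vertex of $Y$, which exhibits a copy of $K_{p,q}$. Therefore, if $G$ contained a copy of $K_{p+q}$ as a subgraph, it would also contain a copy of $K_{p,q}$, contradicting the hypothesis that $G$ is $K_{p,q}$-forbidden. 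Consequently, the Ramsey-theoretic alternative must yield an independent set of size $s$ in $G$, as required.

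There is no real obstacle here; the only thing to be careful about is matching subgraph (not induced subgraph) containment throughout, so that the argument ``$K_{p+q}$ contains $K_{p,q}$'' is valid in the same sense in which ``$G$ is $K_{p,q}$-forbidden'' is stated. Both uses in the paper refer to the (not-necessarily-induced) subgraph relation, so the chain of implications is clean. The proof is essentially a one-liner once Ramsey's theorem and the definition of $K_{p,q}$-forbidden are in hand.
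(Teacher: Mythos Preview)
Your proof is correct and matches the paper's own reasoning: the paper simply observes that any $K_{p,q}$-forbidden graph contains no $K_{p+q}$ and then invokes Ramsey's theorem, which is exactly what you do (with the added detail of spelling out why $K_{p+q}$ contains $K_{p,q}$ as a subgraph).
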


\section{Fixed-Parameter Algorithm}
\label{sec:algorithm}
	In this section, we present a fixed-parameter algorithm for planar graphs to determine if a given {\sc token jumping} instance is reconfigurable or not, as in the following theorem. 
\begin{theorem} \label{the:planar}
{\sc Token jumping} with $k$ tokens can be solved for planar graphs $G=(V,E)$ in
 $O \bigl(|E|+ \bigl(f_1(k) \bigr)^{2k} \bigr)$ time, where $f_1(k) = 2^{6k+1} + 180k^3$. 
\end{theorem}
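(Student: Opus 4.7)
The plan is to combine (a) a buffer-detection step based on Proposition~\ref{prop:4color}, (b) a structural shrinking of the graph to order at most $f_1(k)$ that preserves reconfigurability, and (c) a search over the reconfiguration graph of the shrunk instance. Let $P := \bfI_\ini \cup \bfI_\tar$, so $|P| \le 2k$. I first examine the planar subgraph $G[V \setminus \Neiclosed{G}{P}]$: if its order is at least $4k$, Proposition~\ref{prop:4color} yields an independent set $B$ of size $k$ that is disjoint from and non-adjacent to $P$. Using $B$ as a buffer, I move the tokens of $\bfI_\ini$ one at a time onto $B$ and then from $B$ onto $\bfI_\tar$, obtaining a valid reconfiguration sequence of length $2k$: every intermediate set is the disjoint union of a subset of the independent set $P$ and a subset of the independent set $B$ with no edges crossing between them, so independence is preserved throughout. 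In this case the algorithm returns ``YES'' in $O(|E|)$ time.

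Otherwise $|V \setminus \Neiclosed{G}{P}| < 4k$, and the bulk of the work is to prune vertices outside $P$ while preserving the yes/no status. I partition $V \setminus P$ according to each vertex's \emph{profile} $\Neiopen{G}{v} \cap P$. Since planar graphs forbid $K_{3,3}$ as a subgraph, any three vertices of $P$ share at most two common neighbors outside $P$, so double-counting bounds the number of vertices whose profile has size at least $3$ by $2\binom{2k}{3} \le 180k^3$. For each of the $O(k^2)$ profiles $S \subseteq P$ with $|S| \le 2$, I trim the class $W_S := \{v : \Neiopen{G}{v} \cap P = S\}$ down to a bounded number of representative vertices via an iterative deletion argument: a vertex of $W_S$ whose every potential role in a reconfiguration sequence can be played by another surviving vertex of the same profile is safely removed. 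Aggregating the representatives across all profiles contributes the term $2^{6k+1}$ to $f_1(k)$, and the resulting shrunk graph $G'$ has $|V(G')| \le f_1(k)$.

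Finally, on $G'$ I run a breadth-first search in its reconfiguration graph, whose vertex set consists of the $k$-independent sets of $G'$ (at most $(f_1(k))^k$ in number) and whose edges correspond to single-token moves. This BFS decides whether $\bfI_\ini$ and $\bfI_\tar$ lie in the same connected component, and runs in $O\bigl((f_1(k))^{2k}\bigr)$ time; together with the $O(|E|)$ preprocessing this yields the claimed total. The main obstacle is the shrinking step: one must prove that deleting a redundant vertex of $W_S$ never flips a yes-instance to a no-instance, by rerouting an arbitrary reconfiguration sequence through the surviving representatives of the same profile while respecting independence both inside $W_S$ and against the high-profile vertices that were kept. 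The companion direction (shrunk yes-instance implies original yes-instance) is immediate because $G'$ is an induced subgraph of $G$, so any reconfiguration sequence in $G'$ remains valid in $G$.
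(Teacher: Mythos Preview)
Your outline matches the paper at the two easy ends (the buffer step when $|V\setminus \Neiclosed{G}{P}|\ge 4k$, and the brute-force search on the configuration graph once the order is bounded), but the middle---the shrinking step---is where the proof lives, and there your proposal is both incomplete and mismatched with the stated bound. You partition $V\setminus P$ by the \emph{$P$-profile} $\Neiopen{G}{v}\cap P$ only. The paper instead partitions the neighbour set $\VN$ by its full neighbourhood in $V\setminus \VN = P\cup \VR$; this is why the number of profiles is $2^{|P\cup\VR|}\le 2^{6k}$, and why the summand $2^{6k+1}$ appears (it counts, at $\le 2$ vertices each, the profiles of size $\ge 3$). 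With your coarser partition there are only $O(k^2)$ small profiles, so the term $2^{6k+1}$ has no source in your argument, and the claimed $f_1(k)$ cannot be derived from what you wrote.

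More seriously, you explicitly flag the rerouting lemma (``deleting a redundant vertex of $W_S$ never flips a yes-instance'') as the main obstacle and then do not prove it. The paper's version (their Lemma on shrinking) needs two ingredients you have not provided. First, before shrinking any large class the paper tests whether it already yields a buffer: if $|\ON(\vec{x})\cap\bfI_\ini|\le 1$ and $|\ON(\vec{x})\cap\bfI_\tar|\le 1$ it returns \textsc{yes}. Only the remaining classes, which then necessarily have $|\ON(\vec{x})|=2$, are shrunk. Second, the rerouting argument hinges on that exact equality: because the two vertices of $\ON(\vec{x})$ are common neighbours of the whole class, $K_{3,3}$-freeness forces every \emph{other} vertex to have at most two neighbours inside the class, so $\beta=10k$ survivors suffice to host an independent buffer avoiding $\Neiclosed{G}{\bfI_p\cup\bfI_q}$. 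For your profiles with $|S|\le 1$ there is no such degree bound into $W_S$, so your ``iterative deletion'' is not justified; and even for $|S|=2$ you have not said how many representatives to keep or why that number works. In short, you have correctly identified the architecture but left the load-bearing lemma unproved and used a decomposition that does not deliver the specific $f_1(k)$ in the theorem.
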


As a proof of Theorem~\ref{the:planar}, we will prove that Algorithm~\ref{alg}, described below, is such an algorithm.
In Section~\ref{subsec:planar}, we will explain the algorithm step by step, together with its correctness. 
We will show in Section~\ref{subsec:extension} that our algorithm for planar graphs can be extended to that for $K_{3,t}$-forbidden graphs, $t \ge 3$.

\subsection{Planar graphs} 
\label{subsec:planar}

	As we have mentioned in Introduction, our algorithm is based on two main ideas: 
it returns ``YES'' as soon as we can find a sufficiently large buffer space (Lemmas~\ref{lem:buffer} and \ref{lem:step2});
otherwise it shrinks the graph so as to preserve the existence/nonexistence of a reconfiguration sequence between two given independent sets $\bfI_0$ and $\bfI_r$ (Lemma~\ref{lem:shrinkreconf}).
	After shrinking the graph into a smaller one of the order depending only on $k$, we can solve the problem in a brute-force manner (Lemma~\ref{lem:brute}). 
	It is important to notice that our algorithm returns ``NO'' only in this brute-force step.  
	In the following, we explain how the algorithm finds a buffer space or shrinks the graph, which well utilizes the $K_{3,3}$-forbiddance of $G$. 

At the beginning part of the algorithm (lines~\ref{alg:s0}--\ref{alg:f0}), we set two
parameters $\alpha$ and $\beta$ as $4k$ and $\valbeta$, respectively. 
These are the orders of (sub)graphs that guarantee the existence of
sufficiently large independent sets that will be used as a buffer space.
	Let $\VN = \Neiopen{G}{\bfI_0 \cup \bfI_r} \setminus (\bfI_0 \cup \bfI_r)$, that is, the set of vertices that are not in $\bfI_0 \cup \bfI_r$ and have at least one neighbor in $\bfI_0 \cup \bfI_r$.
	Let $\VR =V\setminus \Neiclosed{G}{\bfI_0 \cup \bfI_r}$.
	Then, no vertex in $\VR$ is adjacent with any vertex in $\bfI_0 \cup \bfI_r$. 
	Notice that $\bfI_0 \cup \bfI_r$, $\VN$ and $\VR$ form a partition of $V$. 
\smallskip

\begin{algorithm}[t]
	\caption{TokenJump for planar graphs}\label{alg}
	\begin{algorithmic}[1]
	\REQUIRE A parameter $k$, a planar graph $G=(V,E)$, and two independent sets $\bfI_0$ and $\bfI_r$ of $G$ such that $|\bfI_0| = |\bfI_r| = k$.
	\ENSURE ``YES'' if there is a reconfiguration sequence between $\bfI_0$ and $\bfI_r$; otherwise ``NO.'' 
	\STATE $\alpha:= 4k$, $\beta:=\valbeta$. \label{alg:s0}
	\STATE $\VN := \Neiopen{G}{\bfI_0 \cup \bfI_r} \setminus (\bfI_0 \cup \bfI_r)$, $\VR :=V \setminus \Neiclosed{G}{\bfI_0 \cup \bfI_r}$. \label{alg:f0}
	\IF[Step~1: $\VR$ has a sufficiently large buffer space]{$|\VR| \ge \alpha$} \label{alg:s1}
		\RETURN ``YES'' and exit. \label{alg:f1}
	\ELSE[$|\VR| < \alpha$ holds below]
		\FOR{each vector $\vec{x}\in \{0,1\}^{V \setminus \VN}$} \label{alg:sfor}
			\STATE $\Nvec{\vec{x}} := \{v\in \VN \mid \Neiopen{G}{v} \cap (V \setminus \VN) = \ON(\vec{x})\}$.
			\IF{$|\Nvec{\vec{x}}| \ge \beta$}
				\IF{$|\ON(\vec{x}) \cap \bfI_0| \le 1$ and $|\ON(\vec{x}) \cap \bfI_r| \le 1$} \label{alg:s2}
					\STATE \COMMENT{Step~2: $\Nvec{\vec{x}}$ has a sufficiently large buffer space}
					\RETURN ``YES'' and exit. \label{alg:f2}
				\ELSE[Step~3: shrink the graph]
					 \STATE Choose an arbitrary subset $\VB(\vec{x})$ of $\Nvec{\vec{x}}$ with $\beta$ vertices, and remove all vertices in $\Nvec{\vec{x}} \setminus \VB(\vec{x})$ from $V$ (and update $V$). \label{alg:s3}
				\ENDIF
			\ENDIF 
		\ENDFOR \COMMENT{$|\Nvec{\vec{x}}| \le \beta$ hold for all vectors $\vec{x}\in \{0,1\}^{V \setminus \VN}$} \label{alg:ffor}
	\ENDIF \COMMENT{The order of $G$ now depends only on $k$} \label{alg:f5}
	\STATE Check the existence of a reconfiguration sequence in a brute-force manner. \label{alg:s4}
\end{algorithmic}
\end{algorithm}

\noindent
{\bf Step 1}: Lines~\ref{alg:s1}--\ref{alg:f1} of Algorithm \ref{alg}.

If $|\VR| \ge \alpha = 4k$, then by Proposition~\ref{prop:4color} the subgraph $G[\VR]$ has an independent set of size at least $k$. 
Then, the algorithm returns ``YES'' because we can use it as a buffer space, as follows. 
	\begin{lemma} \label{lem:buffer}
	If $|\VR| \ge \alpha$, there is a reconfiguration sequence between $\bfI_0$ and $\bfI_r$.
	\end{lemma}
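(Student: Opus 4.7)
The plan is to use the vertex set $\VR$ to extract an independent ``staging area'' that is fully detached from $\bfI_0 \cup \bfI_r$, and then route every token through it in two phases.

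First, since $|\VR|\ge \alpha = 4k$ and $G$ is planar, Proposition~\ref{prop:4color} applied to the induced subgraph $G[\VR]$ yields an independent set $\bfJ \subseteq \VR$ with $|\bfJ|\ge k$; after possibly shrinking $\bfJ$, we may assume $|\bfJ| = k$. Write $\bfI_0 = \{u_1,\dots,u_k\}$, $\bfI_r = \{v_1,\dots,v_k\}$, and $\bfJ = \{b_1,\dots,b_k\}$. By construction of $\VR$, no vertex of $\bfJ$ has a neighbor in $\bfI_0 \cup \bfI_r$.

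Next, I would construct the reconfiguration sequence in two phases. In Phase~1, for $i=1,\dots,k$ in order, move the token on $u_i$ to $b_i$, producing the $i$-th intermediate set $S_i := (\bfI_0\setminus\{u_1,\dots,u_i\}) \cup \{b_1,\dots,b_i\}$. In Phase~2, for $i=1,\dots,k$ in order, move the token on $b_i$ to $v_i$, producing the set $T_i := (\bfJ\setminus\{b_1,\dots,b_i\}) \cup \{v_1,\dots,v_i\}$. The concatenation of these $2k$ moves starts at $\bfI_0$, passes through $S_k = \bfJ = T_0$, and ends at $T_k = \bfI_r$, and each successive pair differs in exactly one token, satisfying condition~(b).

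Finally, I would verify that every $S_i$ and every $T_i$ is an independent set. For $S_i$, the subset $\bfI_0\setminus\{u_1,\dots,u_i\} \subseteq \bfI_0$ is independent, $\{b_1,\dots,b_i\}\subseteq \bfJ$ is independent, and there are no cross edges because each $b_j \in \VR$ has no neighbor in $\bfI_0\cup\bfI_r \supseteq \bfI_0\setminus\{u_1,\dots,u_i\}$. The same argument applied to $\bfI_r$ in place of $\bfI_0$ shows that each $T_i$ is independent.

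No real obstacle arises, since the defining property of $\VR$ (no adjacency with $\bfI_0 \cup \bfI_r$) is exactly what decouples the buffer from both endpoints; the only mildly delicate point is to keep in mind that $\bfI_0$ and $\bfI_r$ may overlap, but this causes no issue because the two phases interact only through $\bfJ$.
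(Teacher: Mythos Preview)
Your proof is correct and follows essentially the same approach as the paper: obtain an independent set of size $k$ inside $\VR$ via Proposition~\ref{prop:4color}, then route all tokens through this buffer in two phases, using the defining property of $\VR$ that none of its vertices are adjacent to $\bfI_0\cup\bfI_r$. Your write-up is in fact more explicit than the paper's, which simply states the two phases without naming the intermediate sets.
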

	\begin{proof}
	Let $\bfIp$ be an independent set of $G[\VR]$ with $|\bfIp| \ge k$; by Proposition~\ref{prop:4color} such an independent set $\bfIp$ always exists. 
	Since no vertex in $G[\VR]$ is adjacent with any vertex in $\bfI_0 \cup \bfI_r$, there is a reconfiguration sequence between $\bfI_0$ and $\bfI_r$ via $\bfIp$, as follows:
move all tokens on the vertices in $\bfI_0$ to vertices in $\bfIp$ one by one; 
and move all tokens on vertices in $\bfIp$ to the vertices in $\bfI_r$ one by one. 
	\qed
	\end{proof}

\noindent
{\bf Step 2}: Lines~\ref{alg:s2}--\ref{alg:f2} of Algorithm \ref{alg}.

We now know that $|\VR| < \alpha$. 
Since $\VR$ was small, the algorithm then tries to find a sufficiently large buffer space in $\VN$.
Notice that 
	\begin{equation} \label{eq:step2}
		|V \setminus \VN| = |\bfI_0 \cup \bfI_r \cup \VR| < 2k + \alpha,
	\end{equation}
which depends only on $k$. 
We will partition $\VN$ into at most $2^{2k+\alpha} = 2^{6k}$ subsets, according to how the vertices in $\VN$ are adjacent with vertices in $V\setminus \VN$.

Before partitioning $\VN$, we first introduce a new notation. 
For a vertex set $S \subseteq V$, let $\vec{x}$ be an $|S|$-dimensional binary vector in $\{0,1\}^{S}$;
we denote by $x_v$ the component of $\vec{x}$ corresponding to a vertex $v \in S$. 
For each vector $\vec{x}\in \{0,1\}^S$, let $\ON(\vec{x}) = \{v\in S \mid x_v = 1\}$. 
For example, if $\vec{x}=(10011)\in \{0,1\}^{\{1,4,5,6,8\}}$ for a vertex set $S = {\{1,4,5,6,8\}}$, then $x_1=1$, $x_4=0$, $x_5=0$, $x_6=1$, $x_8=1$ and $\ON(\vec{x})=\{1,6,8\}$. 

\begin{figure}[t]
\begin{center}
\includegraphics[width=0.45\textwidth]{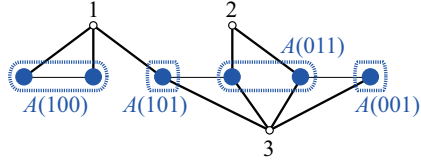}
\end{center}
\vspace{-2em}
\caption{Partitioning $\VN$ of six large (blue) vertices into four subsets $\Nvec{100}$, $\Nvec{101}$, $\Nvec{011}$ and $\Nvec{001}$, where each vertex in $V \setminus \VN = \{1,2,3\}$ is represented by a small (white) circle.}
\label{fig:Aexample}
\end{figure}

To partition the vertex set $\VN$, we prepare all binary vectors in $\{0,1\}^{V\setminus \VN}$. 
By Eq.~(\ref{eq:step2}) the number of the prepared vectors is at most $2^{2k+\alpha} = 2^{6k}$. 
For each vector $\vec{x}\in \{0,1\}^{V \setminus \VN}$, we define $\Nvec{\vec{x}} = \{v\in \VN \mid \Neiopen{G}{v} \cap (V \setminus \VN) = \ON(\vec{x})\}$,  that is,    
$\vec{x}$ is used to represent a pattern of neighbors in $V \setminus \VN$. 
(See \figurename~\ref{fig:Aexample}.)
Therefore, all vertices in the same subset $\Nvec{\vec{x}}$ have exactly the same neighbors in $V \setminus \VN = \bfI_0 \cup \bfI_r \cup \VR$, that is, the vertices in $\ON(\vec{x})$. 
Conversely, each vertex in $\ON(\vec{x})$ is adjacent with {\em all} vertices in $\Nvec{\vec{x}}$. 
We thus have the following proposition. 
	\begin{proposition} \label{prop:K}
	For each vector $\vec{x}\in \{0,1\}^{V \setminus \VN}$, $G[\Nvec{\vec{x}} \cup \ON(\vec{x})]$ contains a complete bipartite graph $K_{|\Nvec{\vec{x}}|, |\ON(\vec{x})|}$ as a subgraph whose bipartition consists of $\Nvec{\vec{x}}$ and $\ON(\vec{x})$.
	\end{proposition}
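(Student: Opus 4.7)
The plan is to prove Proposition~\ref{prop:K} by directly unpacking the definitions of $\Nvec{\vec{x}}$ and $\ON(\vec{x})$; there is no substantive combinatorial content beyond this, and the statement amounts to confirming that the bipartite adjacency structure we set up in the partition is in fact complete.

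First I would establish that $\Nvec{\vec{x}}$ and $\ON(\vec{x})$ are disjoint, so that they form a genuine bipartition. This is immediate: by construction $\Nvec{\vec{x}} \subseteq \VN$ while $\ON(\vec{x}) \subseteq V \setminus \VN$, since $\ON(\vec{x})$ is a subset of the index set $V \setminus \VN$ of the vector $\vec{x}$.

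Next I would verify that every vertex in $\Nvec{\vec{x}}$ is adjacent to every vertex in $\ON(\vec{x})$. By definition of $\Nvec{\vec{x}}$, any $v \in \Nvec{\vec{x}}$ satisfies $\Neiopen{G}{v} \cap (V \setminus \VN) = \ON(\vec{x})$. Hence for every $u \in \ON(\vec{x})$ we have $u \in \Neiopen{G}{v}$, i.e., $\{u,v\} \in E$. Therefore all $|\Nvec{\vec{x}}| \cdot |\ON(\vec{x})|$ pairs between the two sides are edges of $G$, which is exactly the edge set of $K_{|\Nvec{\vec{x}}|,|\ON(\vec{x})|}$ with the specified bipartition. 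Since these edges all lie inside $G[\Nvec{\vec{x}} \cup \ON(\vec{x})]$, the induced subgraph contains the required $K_{|\Nvec{\vec{x}}|,|\ON(\vec{x})|}$ as a subgraph, completing the proof.

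The hard part, if any, is purely notational: one must not confuse $\Nvec{\vec{x}}$ (a set of vertices in $\VN$ sharing a common neighborhood pattern) with $\ON(\vec{x})$ (a set of vertices in $V\setminus \VN$ encoded by the $1$-entries of $\vec{x}$), and one must remember that the defining equality is set equality, giving both directions of adjacency needed. Otherwise, the argument is just a one-line consequence of the partitioning scheme introduced just before the proposition.
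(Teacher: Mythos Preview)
Your proposal is correct and mirrors the paper's own reasoning: the paper does not give a separate proof but simply observes, just before stating the proposition, that every vertex in $\Nvec{\vec{x}}$ has exactly the vertices of $\ON(\vec{x})$ as its neighbors in $V\setminus \VN$, and conversely each vertex of $\ON(\vec{x})$ is adjacent to all of $\Nvec{\vec{x}}$. Your write-up makes this explicit (including the disjointness $\Nvec{\vec{x}}\subseteq \VN$, $\ON(\vec{x})\subseteq V\setminus \VN$), which is exactly the intended argument.
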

Note that 
$0$-vector (i.e., every component is $0$) is not used, because each vertex in $\VN$ is adjacent to at least one vertex in $\bfI_0 \cup \bfI_r$. 
In this way, we partition $\VN$ into at most $2^{2k+\alpha}$ subsets $\Nvec{\vec{x}}$ according to the vectors $\vec{x} \in \{0,1\}^{V\setminus \VN}$.
	Proposition~\ref{prop:K} and the $K_{3,3}$-forbiddance give the following property on $\ON(\vec{x})$.
(Note that $\beta \ge 3$.) 
	\begin{lemma} \label{lem:atmosttwo}
	If $|\Nvec{\vec{x}}| \ge \beta$ holds for a vector $\vec{x} \in \{0,1\}^{V\setminus \VN}$, then $|\ON(\vec{x})| \le 2$. 
	\end{lemma}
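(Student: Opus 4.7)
The plan is to prove the contrapositive (or argue by contradiction): assume that $|\Nvec{\vec{x}}| \ge \beta$ and $|\ON(\vec{x})| \ge 3$, and then derive a contradiction with planarity. The key engine is Proposition~\ref{prop:K}, which guarantees that $G[\Nvec{\vec{x}} \cup \ON(\vec{x})]$ contains a complete bipartite graph $K_{|\Nvec{\vec{x}}|,|\ON(\vec{x})|}$ whose bipartition is exactly $\Nvec{\vec{x}} \cup \ON(\vec{x})$.

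First I would verify the trivial numerical fact that $\beta = 10k \ge 3$ whenever the instance is non-trivial (i.e.\ $k \ge 1$; the $k=0$ case is immediate and can be handled separately), so that $|\Nvec{\vec{x}}| \ge \beta$ implies $|\Nvec{\vec{x}}| \ge 3$. Then, under the assumption $|\ON(\vec{x})| \ge 3$, I would pick any three vertices from $\Nvec{\vec{x}}$ and any three from $\ON(\vec{x})$; the complete bipartite subgraph furnished by Proposition~\ref{prop:K} restricts to a $K_{3,3}$ on these six vertices. Since $G$ is planar, Kuratowski's theorem (as recorded in the paragraph preceding Proposition~\ref{prop:free}) implies that $G$ is $K_{3,3}$-forbidden, giving the desired contradiction. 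Hence $|\ON(\vec{x})| \le 2$.

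The ``main obstacle'' here is really just making sure the bookkeeping is clean: one must explicitly note that planarity is being used only through the $K_{3,3}$-forbidden property, because this same lemma will need to be restated in Section~\ref{subsec:extension} for $K_{3,t}$-forbidden graphs, where $|\ON(\vec{x})| \le t-1$ will replace $|\ON(\vec{x})| \le 2$. The argument is otherwise a one-line application of Proposition~\ref{prop:K} combined with the forbidden-subgraph hypothesis on $G$, and no new combinatorial idea is required beyond what the proposition already packages.
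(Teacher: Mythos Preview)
Your proof is correct and is exactly the argument the paper has in mind: the paper omits the proof but prefaces the lemma with ``Proposition~\ref{prop:K} and the $K_{3,3}$-forbiddance give the following property on $\ON(\vec{x})$. (Note that $\beta \ge 3$.)'', which is precisely your reasoning.

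One small correction to your aside about Section~\ref{subsec:extension}: the $K_{3,t}$-forbidden version (Lemma~\ref{lem:atmosttwoK3t}) still concludes $|\ON(\vec{x})| \le 2$, not $|\ON(\vec{x})| \le t-1$. The point is that $\beta_t \ge t$, so $|\Nvec{\vec{x}}| \ge t$; if also $|\ON(\vec{x})| \ge 3$, Proposition~\ref{prop:K} yields a $K_{t,3} = K_{3,t}$ subgraph, contradicting the hypothesis on $G$. The roles of the two sides of the bipartition swap, but the bound on $|\ON(\vec{x})|$ stays at $2$.
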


	The algorithm tries to find a sufficiently large buffer space from one of the subsets $\Nvec{\vec{x}}$, $\vec{x} \in \{0,1\}^{V\setminus \VN}$, such that $|\Nvec{\vec{x}}| \ge \beta = \valbeta$.
	The following lemma proves the correctness of Step~2.
	\begin{lemma} \label{lem:step2}
	Suppose that there exists a binary vector $\vec{x} \in \{0,1\}^{V\setminus \VN}$ such that $|\Nvec{\vec{x}}| \ge \beta$, $|\ON(\vec{x}) \cap \bfI_0| \le 1$ and $|\ON(\vec{x}) \cap \bfI_r| \le 1$. 
	Then, there exists a reconfiguration sequence between $\bfI_0$ and $\bfI_r$. 
	\end{lemma}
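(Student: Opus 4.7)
The plan is to construct an explicit reconfiguration sequence that ``parks'' all $k$ tokens in a large independent set $\bfJ \subseteq \Nvec{\vec{x}}$ and then redistributes them onto $\bfI_r$. First, I apply Proposition~\ref{prop:4color} to the induced subgraph $G[\Nvec{\vec{x}}]$, which is planar because $G$ is. Since $|\Nvec{\vec{x}}| \ge \beta = 10k$, this yields an independent set $\bfJ \subseteq \Nvec{\vec{x}}$ with $|\bfJ| \ge k$, which will serve as the buffer.

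Two structural facts about $\bfJ$ drive everything: (a) $\bfJ$ is disjoint from $\bfI_0 \cup \bfI_r$ since $\bfJ \subseteq \VN$; and (b) for every $j \in \bfJ$, $\Neiopen{G}{j} \cap (V \setminus \VN) = \ON(\vec{x})$, so the only neighbors of $j$ inside $\bfI_0$ (resp.\ $\bfI_r$) are those in $\ON(\vec{x}) \cap \bfI_0$ (resp.\ $\ON(\vec{x}) \cap \bfI_r$). By hypothesis, each of these sets has at most one element; I write $v_0$ and $v_r$ for those vertices when they exist.

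I split the sequence into two phases. In Phase~1, I migrate all $k$ tokens from $\bfI_0$ to a size-$k$ subset of $\bfJ$. If $v_0$ exists, I move its token first to some $j_0 \in \bfJ$; this step is legal because the only possible neighbor of $j_0$ inside $\bfI_0$ is $v_0$ itself, which is simultaneously vacated. I then transfer the remaining $k-1$ tokens from $\bfI_0 \setminus \{v_0\}$ to distinct vertices of $\bfJ \setminus \{j_0\}$, one at a time; each destination is independent from the vertices already placed in $\bfJ$ and, because $\bfI_0 \setminus \{v_0\}$ is disjoint from $\ON(\vec{x})$, also non-adjacent to the remaining source tokens. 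Phase~2 reverses this process to reach $\bfI_r$: I move $k-1$ tokens from the buffer to $\bfI_r \setminus \{v_r\}$ in any order (these destinations are not in $\ON(\vec{x})$, hence non-adjacent to the buffer), and if $v_r$ exists, I finish by jumping the last token from its vertex in $\bfJ$ to $v_r$. The final set equals $\bfI_r$.

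The main obstacle---and the reason both hypotheses on $|\ON(\vec{x}) \cap \bfI_0|$ and $|\ON(\vec{x}) \cap \bfI_r|$ are needed---is the careful scheduling around the ``obstructing'' vertices $v_0$ and $v_r$, each of which is adjacent to every vertex of $\bfJ$. The token on $v_0$ must be evacuated before any other token enters $\bfJ$, and the token destined for $v_r$ must be placed only after all other tokens have left $\bfJ$. If $\ON(\vec{x})$ contained two tokens of $\bfI_0$ (or of $\bfI_r$), then no single initial (final) move could free up $\bfJ$ for the remainder of the plan---which is precisely why the algorithm diverts to Step~3 rather than Step~2 in that situation.
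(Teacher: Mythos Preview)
Your proof is correct and follows essentially the same approach as the paper: obtain a large independent set inside $\Nvec{\vec{x}}$ via Proposition~\ref{prop:4color}, evacuate the (at most one) obstructing token $v_0\in\ON(\vec{x})\cap\bfI_0$ first, park the remaining tokens in the buffer, then reverse the process with $v_r$ handled last. The only cosmetic difference is that the paper states the buffer has size at least $2k$ (which $\beta=10k$ indeed yields) while you observe that size $k$ already suffices.
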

	\begin{proof}
	Suppose that such a vector $\vec{x}$ exists.
	Since $|\Nvec{\vec{x}}| \ge \beta = \valbeta$, by Proposition~\ref{prop:4color} the graph $G[\Nvec{\vec{x}}]$ has an independent set $\bfIp$ of size at least $2k$. 
	Let $w_0 \in \ON(\vec{x}) \cap \bfI_0$ and $w_r \in \ON(\vec{x}) \cap \bfI_r$ if such vertices exist;
$w_0 = w_r$ may hold.
	Then, we obtain a reconfiguration sequence between $\bfI_0$ and $\bfI_r$ via $\bfIp$, as follows:
		\begin{listing}{aaa}
		\item[(a)] move the token on $w_0$ to an arbitrary vertex $w^\prime$ in $\bfIp$;
		\item[(b)] move all tokens in $\bfI_0 \setminus \{w_0 \}$ to vertices in $\bfIp \setminus \{w^\prime\}$ one by one; 
		\item[(c)] move tokens in $\bfIp \setminus \{w^\prime\}$ to the vertices in $\bfI_r \setminus \{w_r\}$ one by one; and 
		\item[(d)] move the last token on $w^\prime \in \bfIp$ to $w_r$.
		\end{listing}
	Note that no vertex in $\bfI_0 \setminus \{w_0\}$ is adjacent with any vertex in $\bfIp$ because $|\ON(\vec{x}) \cap \bfI_0| \le 1$.
	Furthermore, since $\bfIp$ is an independent set in $G[\Nvec{\vec{x}}]$, $w^\prime \in \bfIp$ is not adjacent with any vertex in $\bfIp \setminus \{w^\prime\}$.  
	Therefore, we can execute both (a) and (b) above without violating the independence of tokens. 
	By the symmetric arguments, we can execute both (c) and (d) above, too. 
	Thus, according to (a)--(d) above, there exists a reconfiguration sequence between $\bfI_0$ and $\bfI_r$.
	\qed
	\end{proof}



\noindent
{\bf Step 3}: Line~\ref{alg:s3} of Algorithm \ref{alg}.

We now consider to shrink the graph:
the algorithm shrinks each subset $\Nvec{\vec{x}}$ of size more than $\beta$ into a smaller one $\VB(\vec{x})$ of size $\beta$. 

Consider any subset $\Nvec{\vec{x}}$ of size more than $\beta$. 
Then, by Lemma~\ref{lem:atmosttwo} we have $|\ON(\vec{x})| \le 2$. 
In fact, since we have executed Step~2, $|\ON(\vec{x})\cap \bfI_0|= 2$ or $|\ON(\vec{x})\cap \bfI_r|= 2$ holds (recall Lemma~\ref{lem:step2}). 
We choose an arbitrary set $\VB(\vec{x})$ of $\beta = \valbeta$ vertices from $\Nvec{\vec{x}}$. 
Then, localizing independent sets intersecting $\Nvec{\vec{x}}$ only to $\VB(\vec{x})$ does not affect the reconfigurability, as in the following lemma. 
	\begin{lemma} \label{lem:shrinkreconf}
	$G$ has a reconfiguration sequence between $\bfI_0$ and $\bfI_r$ if and only if there exists a reconfiguration sequence $\langle \bfI_0, \bfIp_1, \bfIp_2, \ldots, \bfIp_{\ell^\prime}, \bfI_r \rangle$ such that $\bfIp_{j}\cap \Nvec{\vec{x}} \subseteq \VB(\vec{x})$ holds for every index $j \in \{1, 2, \ldots, \ell^\prime\}$. 
	\end{lemma}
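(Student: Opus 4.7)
The backward direction is immediate, since any sequence satisfying the safety condition on its intermediate sets is itself a valid reconfiguration sequence from $\bfI_0$ to $\bfI_r$. For the forward direction, given a reconfiguration sequence $\sigma = \langle \bfI_0 = J_0, J_1, \ldots, J_\ell = \bfI_r \rangle$, the plan is to build a new sequence $\sigma^\prime$ whose intermediate sets all lie in the ``safe region'' $\bfIp \cap \Nvec{\vec{x}} \subseteq \VB(\vec{x})$, by processing $\sigma$ step by step while maintaining the invariant that the current state $\bfIp$ agrees with $J_j$ outside $\Nvec{\vec{x}}$, has $|\bfIp \cap \Nvec{\vec{x}}| = |J_j \cap \Nvec{\vec{x}}|$, and satisfies $\bfIp \cap \Nvec{\vec{x}} \subseteq \VB(\vec{x})$.

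The core structural tool is a \emph{safe-analog} claim: for every independent set $J$ of $G$ with $|J| \le k$, there exists an independent set $J^{*}$ with $J^{*} \cap (V \setminus \Nvec{\vec{x}}) = J \cap (V \setminus \Nvec{\vec{x}})$, $|J^{*} \cap \Nvec{\vec{x}}| = |J \cap \Nvec{\vec{x}}|$, and $J^{*} \cap \Nvec{\vec{x}} \subseteq \VB(\vec{x})$. To prove this, note that we are in Step~3, so by Lemma~\ref{lem:atmosttwo} combined with the failure of Step~2 we have $|\ON(\vec{x})| = 2$; write $\ON(\vec{x}) = \{a,b\}$. By Proposition~\ref{prop:K} and $K_{3,3}$-forbiddance, every vertex $u \notin \{a,b\}$ has at most two neighbors in $\Nvec{\vec{x}}$, since otherwise $\{u,a,b\}$ together with three such neighbors would form a $K_{3,3}$. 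If $J \cap \Nvec{\vec{x}} \neq \emptyset$, then $a,b \notin J$ (each is adjacent to any vertex of $J \cap \Nvec{\vec{x}}$), so the at most $k-1$ vertices of $J \setminus \Nvec{\vec{x}}$ jointly block at most $2(k-1)$ vertices of $\VB(\vec{x})$, leaving at least $\beta - 2(k-1) \ge 8k$ unblocked. Proposition~\ref{prop:4color} applied to this planar subgraph yields an independent set of size at least $2k$, from which any $|J \cap \Nvec{\vec{x}}| \le k$ vertices serve as $J^{*} \cap \Nvec{\vec{x}}$.

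Given the safe-analog claim, we construct $\sigma^\prime$ inductively, starting from $\bfI_0$, which already satisfies the invariant since $\bfI_0 \cap \Nvec{\vec{x}} \subseteq \bfI_0 \cap \VN = \emptyset$. For each original step $J_{j-1} \to J_j$ in which a token moves from $u$ to $w$, we realize it in $\sigma^\prime$ by a short sequence of moves: if $w \in \Nvec{\vec{x}} \setminus \VB(\vec{x})$ we substitute a replacement $w^\prime \in \VB(\vec{x})$; and if existing $\VB(\vec{x})$-tokens would conflict with the intended move (for example because $w$ is adjacent to some of them, or because the $\Nvec{\vec{x}}$-token count must change), we first reshuffle tokens inside $\VB(\vec{x})$. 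A counting argument mirroring the safe-analog proof — at most $k-1$ currently occupied vertices plus at most $2(k-1)$ blocked by neighborhood, against $\beta = 10k$ available slots in $\VB(\vec{x})$ — shows that each required sub-move has a legal target, and reshuffling inside $\VB(\vec{x})$ never leaves the safe region.

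The main obstacle is this last inductive step. Although the counting is straightforward, a case analysis is required on the locations of $u$ and $w$ relative to $\Nvec{\vec{x}}$ and on whether $w \in \ON(\vec{x})$ (in which case the $\Nvec{\vec{x}}$-token population must simultaneously drop to zero), and one must verify in each case that the prescribed reshuffles can be scheduled as valid single-token moves through safe intermediate sets. The comfortable gap between the $O(k)$ blocked slots and the $\beta = 10k$ available ones in $\VB(\vec{x})$ is exactly what makes every sub-move legal; what remains is a routine but lengthy verification.
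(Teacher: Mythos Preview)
Your approach is correct in outline, but it differs substantially from the paper's argument, and the paper's route is considerably shorter. You simulate the original sequence move-by-move, maintaining a ``safe analog'' of each $J_j$ and patching each individual step by a small case analysis on where $u$ and $w$ sit relative to $\Nvec{\vec{x}}$ and $\ON(\vec{x})$. The paper instead performs a single global replacement: it takes the \emph{first} index $p$ and the \emph{last} index $q$ at which the given sequence touches $\Nvec{\vec{x}}\setminus \VB(\vec{x})$, finds one independent set $\bfIB\subseteq \VB(\vec{x})$ of size $\ge k$ that is non-adjacent to both $\bfI_p$ and $\bfI_q$ (via exactly the same counting you use, namely $\beta-3\cdot|\bfI_p\cup\bfI_q|\ge 4k$ and Proposition~\ref{prop:4color}), and then replaces the entire stretch $\langle \bfI_p,\ldots,\bfI_q\rangle$ by the obvious detour ``move everything into $\bfIB$, then move everything out to $\bfI_q\setminus\{w_q\}$''. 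Because $p$ and $q$ are the extreme bad indices, the untouched prefixes and suffixes are already safe, so one buffer trip fixes the whole sequence in one shot.

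What this buys the paper is that there is no case analysis at all: it never asks whether $u$ or $w$ lies in $\Nvec{\vec{x}}$, never worries about reshuffling conflicting $\VB(\vec{x})$-tokens, and never needs your per-step ``safe-analog'' claim. Your method would also work and has the mild advantage of tracking the original sequence more closely (potentially useful if one cared about length), but for the bare existence statement of Lemma~\ref{lem:shrinkreconf} the paper's first/last-index detour is the cleaner proof and avoids the ``routine but lengthy verification'' you flag at the end.
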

	\begin{proof}
	The if-part clearly holds, and hence we prove the only-if-part.
%
	Suppose that $G$ has a reconfiguration sequence $\langle \bfI_0, \bfI_1, \ldots, \bfI_{\ell}, \bfI_r \rangle$ between $\bfI_0$ and $\bfI_r$. 
	If $\bfI_{j} \cap \Nvec{\vec{x}} \subseteq \VB(\vec{x})$ holds for every $j \in \{1, 2, \ldots, \ell \}$, the claim is already satisfied. 
	Thus, let $\bfI_{p}$ and $\bfI_{q}$ be the first and last independent sets of $G$, respectively, in the subsequence $\langle \bfI_1, \bfI_2, \ldots, \bfI_{\ell} \rangle$ such that $\bfI_j \cap (\Nvec{\vec{x}} \setminus \VB(\vec{x})) \neq \emptyset$.
	Note that $p = q$ may hold.
	Then, $\bfI_p$ contains exactly one vertex $w_p$ in $\bfI_p \cap (\Nvec{\vec{x}} \setminus \VB(\vec{x}))$, and $\bfI_q$ contains exactly one vertex $w_q$ in $\bfI_q \cap (\Nvec{\vec{x}} \setminus \VB(\vec{x}))$; 
$w_p = w_q$ may hold. 
	It should be noted that both $\ON(\vec{x}) \cap \bfI_p = \emptyset$ and $\ON(\vec{x}) \cap \bfI_q = \emptyset$ hold, because $w_p, w_q \in \Nvec{\vec{x}}$ are both adjacent with the two vertices in $\ON(\vec{x})$. 
\smallskip


	We first claim that $G[\VB(\vec{x})]$ contains an independent set $\bfIB$ such that $|\bfIB| \ge k$ and $\bfIB \cap \Neiclosed{G}{\bfI_p \cup \bfI_q} = \emptyset$.
	By Proposition~\ref{prop:K}, $G[\Nvec{\vec{x}} \cup \ON(\vec{x})]$ contains a complete bipartite graph $K_{|\Nvec{\vec{x}}|,2}$ as a subgraph; recall that $|\ON(\vec{x})| = 2$.
	Therefore, due to the $K_{3,3}$-forbiddance of $G$, every vertex in $V \setminus \ON(\vec{x})$ can be adjacent with at most two vertices in $\Nvec{\vec{x}}$, and hence at most two vertices in $\VB(\vec{x})$.  
	Since both $\ON(\vec{x}) \cap \bfI_p = \emptyset$ and $\ON(\vec{x}) \cap \bfI_q = \emptyset$ hold, 
we have $\bfI_p, \bfI_q \subseteq V \setminus \ON(\vec{x})$ and hence  
	\begin{equation}
	\label{eq:three}
		\bigl|\VB(\vec{x}) \setminus \Neiclosed{G}{\bfI_p \cup \bfI_q} \bigr| \ge \beta - 3 \cdot | \bfI_p \cup \bfI_q | \ge \valbeta - 6 k = 4k. 
	\end{equation}
	By Proposition~\ref{prop:4color} we conclude that $G[\VB(\vec{x}) \setminus \Neiclosed{G}{\bfI_p \cup \bfI_q}]$ contains an independent set $\bfIB$ such that $|\bfIB| \ge k$ and $\bfIB \cap \Neiclosed{G}{\bfI_p \cup \bfI_q} = \emptyset$. 
\smallskip

	We then show that the subsequence $\langle \bfI_{p}, \bfI_{p+1}, \ldots, \bfI_{q} \rangle$ can be replaced with another sequence $\langle \tilde{\bfI}_{p}, \tilde{\bfI}_{p+1}, \ldots, \tilde{\bfI}_{q^\prime} \rangle$ such that $\tilde{\bfI_{j}} \cap \Nvec{\vec{x}} \subseteq \VB(\vec{x})$ for every $j\in \{p, p+1, \ldots,q^\prime\}$. 
	To do so, we use the independent set $\bfIB$ of $G[\VB(\vec{x}) \setminus \Neiclosed{G}{\bfI_p \cup \bfI_q}]$ as a buffer space.
	We now explain how to move the tokens: 
	\begin{listing}{aaa}
	\item[(1)] From the independent set $\bfI_{p-1}$, we first move the token on the vertex $u_p$ in $\bfI_{p-1} \setminus \bfI_{p}$ to an arbitrarily chosen vertex $v^*$ in $\bfIB$, instead of $w_p$.
					Let $\tilde{\bfI}_{p}$ be the resulting vertex set, that is, $\tilde{\bfI}_{p} = (\bfI_{p} \setminus \{ w_p\}) \cup \{ v^* \}$.  
	\smallskip

	\item[(2)] We then move all tokens in $\bfI_{p-1} \setminus \{u_p \}$ $\bigl(= \bfI_{p-1} \cap \bfI_{p} \bigr)$ to vertices in $\bfIB \setminus \{v^* \}$ one by one in an arbitrary order. 
	\smallskip

	\item[(3)] We move tokens in $\bfIB \setminus \{v^* \}$ to the vertices in $\bfI_{q} \cap \bfI_{q+1}$ $\bigl(= \bfI_{q} \setminus \{w_q\} \bigr)$ one by one in an arbitrary order. 
					Let $\tilde{\bfI}_{q^\prime}$ be the resulting vertex set, that is, $\tilde{\bfI}_{q^\prime} = (\bfI_q \setminus \{w_q\}) \cup \{ v^*\} = (\bfI_{q} \cap \bfI_{q+1}) \cup \{ v^* \}$.
	\end{listing}
	Clearly, $\tilde{\bfI_{j}} \cap \Nvec{\vec{x}} \subseteq \VB(\vec{x})$ holds for every $j\in \{p,\ldots,q^\prime\}$. 
	Furthermore, notice that $\bfI_{q+1}$ can be obtained from $\tilde{\bfI}_{q^\prime}$ by moving a single token on $v^* \in \tilde{\bfI}_{q^\prime}$ to the vertex in $\bfI_{q+1} \setminus \bfI_q$. 
	Therefore, $\langle \bfI_{p}, \bfI_{p+1}, \ldots, \bfI_{q} \rangle$ can be correctly replaced with $\langle \tilde{\bfI}_{p}, \tilde{\bfI}_{p+1}, \ldots, \tilde{\bfI}_{q^\prime} \rangle$ if the vertex set $\tilde{\bfI}_{j}$ forms an independent set of $G$ for every $j \in \{ p, \ldots, q^\prime\}$.  
	For every $j \in \{ p, \ldots, q^\prime\}$, notice that either $\tilde{\bfI}_{j} \subset \bfI_p \cup \bfIB$ or $\tilde{\bfI}_{j} \subset \bfI_q \cup \bfIB$ holds. 
	Then, since $\bfIB \cap \Neiclosed{G}{\bfI_p \cup \bfI_q} = \emptyset$, the vertex set $\tilde{\bfI}_{j}$ forms an independent set of $G$. 
	
	By this way, we obtain a reconfiguration sequence $\langle \bfI_0, \ldots, \bfI_{p-1}, \tilde{\bfI}_{p}, \ldots, \tilde{\bfI}_{q^\prime}, \allowbreak \bfI_{q+1}, \ldots, \bfI_r \rangle$ such that no vertex in $\Nvec{\vec{x}} \setminus \VB(\vec{x})$ is contained in any independent set in the sequence. 
\qed
\end{proof}

Lemma~\ref{lem:shrinkreconf} implies that, even if we remove all vertices in $\Nvec{\vec{x}} \setminus \VB(\vec{x})$ for an arbitrary chosen set $\VB(\vec{x}) \subseteq \Nvec{\vec{x}}$ of $\beta = \valbeta$ vertices, it does not affect the existence/nonexistence of a reconfiguration sequence between $\bfI_0$ and $\bfI_r$. 
Thus, we can shrink the subset $\Nvec{\vec{x}}$ into $\VB(\vec{x})$ of size $\beta = \valbeta$. 
\medskip

\noindent
{\bf Step 4}: Line~\ref{alg:s4} of Algorithm \ref{alg}.

In this step, $|\Nvec{\vec{x}}| \le \beta = \valbeta$ hold for all vectors $\vec{x} \in \{0,1\}^{V\setminus \VN}$. 
Furthermore, Proposition~\ref{prop:K} and the $K_{3,3}$-forbiddance of $G$ imply that $|\Nvec{\vec{x}}| \le 2$ if $|\ON(\vec{x})| \ge 3$.
Since $\alpha = 4k$ and $\beta = \valbeta$, by Eq.~(\ref{eq:step2}) we have
\begin{eqnarray*}
|\VN| 
		&=& \sum \{ |\Nvec{\vec{x}}| : \vec{x} \in \{0,1\}^{V \setminus \VN},\ 1 \le |\ON(\vec{x})| \le 2 \} \nonumber \\
			&& ~~ + \sum \{ |\Nvec{\vec{x}}| : \vec{x} \in \{0,1\}^{V \setminus \VN},\ |\ON(\vec{x})| \ge 3 \} \nonumber \\
		&\le& \beta \cdot \left((2k + \alpha) + {2k+\alpha \choose 2} \right) + 2 \cdot \left(2^{2k+\alpha} - (2k + \alpha) - {2k+\alpha \choose 2} \right) \nonumber \\ 
		&=& 2^{6k+1} + 180k^3 - 6k^2 -6k. 
\end{eqnarray*}
Then, since $|\bfI_0 \cup \bfI_r| \le 2k$ and $|\VR| \le \alpha = 4k$, we can bound $|V|$ by  
	\[
		|V| = |\bfI_0 \cup \bfI_r| + |\VR| + |\VN| < 2^{6k+1} + 180k^3,
	\]
which is denoted by $f_1(k)$.
	Since the order $f_1(k)$ of $G$ now depends only on $k$,  we can apply a brute-force algorithm as follows.
	\begin{lemma} \label{lem:brute}
	If $|V| \le f_1(k)$, {\sc token jumping} is solvable in $O\Bigl( \bigl(f_1(k) \bigr)^{2k} \Bigr)$ time. 
	\end{lemma}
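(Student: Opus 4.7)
The plan is to prove Lemma~\ref{lem:brute} by building the standard \emph{reconfiguration graph} (or configuration graph) on the set of all independent sets of $G$ of cardinality $k$, and then checking whether $\bfI_0$ and $\bfI_r$ lie in the same connected component via breadth-first search. Since $|V| \le f_1(k)$, the size of this auxiliary graph is bounded by a function of $k$, which is what makes the approach fixed-parameter.

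More concretely, first I would enumerate every $k$-subset $S$ of $V$ and discard those that are not independent in $G$; the check for independence costs $O(k^2)$ per subset. Since there are at most $\binom{|V|}{k} \le \binom{f_1(k)}{k} \le (f_1(k))^k$ candidate subsets, this stage takes $O\bigl(k^2 \cdot (f_1(k))^k\bigr)$ time and yields a vertex set $\calI$ of the reconfiguration graph with $|\calI| \le (f_1(k))^k$. Second, I would construct the edges: for each ordered pair $(S,S') \in \calI \times \calI$, add an edge when $|S \triangle S'|=2$, which can be tested in $O(k)$ time. This takes $O\bigl(k \cdot (f_1(k))^{2k}\bigr)$ time and produces a graph with at most $(f_1(k))^{2k}$ edges. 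Third, I would run BFS from $\bfI_0$ in this reconfiguration graph and return ``YES'' if and only if $\bfI_r$ is reached; BFS runs in time linear in the number of vertices plus edges, hence $O\bigl((f_1(k))^{2k}\bigr)$.

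Correctness is immediate from the definition of the {\sc token jumping} problem: a reconfiguration sequence between $\bfI_0$ and $\bfI_r$ is exactly a path from $\bfI_0$ to $\bfI_r$ in the reconfiguration graph just constructed, since condition~(a) restricts attention to independent sets of cardinality $k$ and condition~(b) is exactly the edge relation $|S\triangle S'|=2$ between two such sets. Summing the three stages, the dominating term is $O\bigl((f_1(k))^{2k}\bigr)$, matching the bound claimed in the lemma.

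There is no real obstacle here: the only thing to be mindful of is that all quantities $|\calI|$, $\binom{|V|}{k}$, and the BFS cost must be bounded cleanly by $(f_1(k))^{2k}$, which follows from $|V| \le f_1(k)$ and the crude estimate $\binom{f_1(k)}{k} \le (f_1(k))^k$. The lemma is really a black-box observation that once the input has been shrunk to size depending only on $k$, the trivial exponential-time algorithm for reconfiguration becomes a fixed-parameter algorithm in $k$.
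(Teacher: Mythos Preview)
Your proposal is correct and follows essentially the same approach as the paper: build the configuration graph whose nodes are the size-$k$ independent sets of $G$ and whose edges correspond to single token jumps, bound its size by $\bigl(f_1(k)\bigr)^{2k}$ using $\binom{f_1(k)}{k}\le (f_1(k))^k$, and decide reachability by breadth-first search. The only difference is cosmetic---you spell out the per-subset and per-pair costs explicitly, whereas the paper absorbs them directly into the $O(|\cvertex|^2)$ bound.
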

	\begin{proof}
	We construct a \textit{configuration graph} $\cgraph = (\cvertex, \cedge)$, as follows: 
		\begin{listing}{aaa}
		\item[(\one)] each node in $\cgraph$ corresponds to an independent set of $G$ with size $k$; and 
		\item[(\two)] two nodes in $\cgraph$ are joined by an edge if and only if the corresponding two independent sets can be reconfigured by just a single token jump.  
		\end{listing}
	Clearly, there is a reconfiguration sequence between $\bfI_0$ and $\bfI_r$ if and only if there is a path in $\cgraph$ between the two corresponding nodes. 

	Since $G$ has at most the number ${f_1(k) \choose k}$ of distinct independent sets of size exactly $k$, we have $|\cvertex| \le \bigl( f_1(k) \bigr)^k$. 
	The configuration graph $\cgraph$ above can be constructed in $O(|\cvertex|^2)$ time. 
	Furthermore, by the breadth-first search on $\cgraph$ which starts from the node corresponding to $\bfI_0$, we can check if $\cgraph$ has a desired path or not in $O(|\cvertex| + |\cedge|) = O(|\cvertex|^2)$ time. 
	In this way, {\sc token jumping} can be solved in $O(|\cvertex|^2) = O\bigl( \bigl(f_1(k) \bigr)^{2k} \bigr)$ time in total. 
	\qed
	\end{proof}

This completes the correctness proof of Algorithm~\ref{alg}. 
\medskip

\noindent
{\bf Running time.}

We now estimate the running time of Algorithm~\ref{alg}.
We first claim that lines~\ref{alg:s0}--\ref{alg:f5} can be executed in $O(|E|)$ time.  
Lines~\ref{alg:sfor}--\ref{alg:ffor} can be clearly done in fixed-parameter running time, but actually these lines can be done in $O(|E|)$ time because $|\VN| = |\bigcup \{ \Nvec{\vec{x}} : \vec{x} \in \{0,1\}^{V \setminus \VN} \}|$ is at most $n$; 
we can compute $\vec{x}$ implicitly. 
By Lemma~\ref{lem:brute} we can execute line~\ref{alg:s4} in $O\bigl( \bigl(f_1(k) \bigr)^{2k} \bigr)$ time.
Thus, the total running time of Algorithm~\ref{alg} is $O\bigl(|E|+\bigl(f_1(k) \bigr)^{2k} \bigr)$.  

This completes the proof of Theorem~\ref{the:planar}.

\subsection{$K_{3,t}$-forbidden graphs} 
\label{subsec:extension}

	In this subsection, we show that our algorithm for planar graphs can be extended to that for $K_{3,t}$-forbidden graphs, and give the following theorem. 
	\begin{theorem} \label{the:extension}
	For a fixed integer $t \ge 3$, let $G$ be a $K_{3,t}$-forbidden graph. 
	Then, {\sc token jumping} for $G$ can be solved in fixed-parameter running time, when parameterized by the number $k$ of tokens. 
	\end{theorem}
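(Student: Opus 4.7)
The plan is to re-run Algorithm~\ref{alg} almost verbatim, replacing the two ingredients that were specific to planarity---the four-colour bound (Proposition~\ref{prop:4color}) and $K_{3,3}$-forbiddance---with their $K_{3,t}$ analogues. Concretely, I would set $\alpha := \Ram(k, 3+t)$ and $\beta := \Ram(k,3+t) + 2tk$. These are no longer linear in $k$ as in the planar case, but for fixed $t$ they are still bounded by functions of $k$ alone, which is all the algorithm needs.

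Next I would check that each of Lemmas~\ref{lem:buffer}--\ref{lem:shrinkreconf} still holds after this substitution. Lemma~\ref{lem:buffer} becomes immediate: Proposition~\ref{prop:free} applied to $G[\VR]$ with $|\VR|\ge\alpha$ produces the required buffer independent set of size $k$. Lemma~\ref{lem:atmosttwo} is also preserved, because if $|\Nvec{\vec{x}}|\ge\beta\ge t$ and $|\ON(\vec{x})|\ge 3$, Proposition~\ref{prop:K} would embed a $K_{3,t}$ into $G$, contradicting $K_{3,t}$-forbiddance. Lemma~\ref{lem:step2} carries over verbatim by invoking Proposition~\ref{prop:free} on $G[\Nvec{\vec{x}}]$ to obtain a buffer of size $k$ inside $\Nvec{\vec{x}}$.

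The one step where I expect to do real work is Lemma~\ref{lem:shrinkreconf}. Its original proof crucially uses that every vertex outside $\ON(\vec{x})$ has at most two neighbours inside $\Nvec{\vec{x}}$; this was a consequence of $K_{3,3}$-forbiddance combined with the $K_{|\Nvec{\vec{x}}|,2}$ embedded by Proposition~\ref{prop:K}. Under $K_{3,t}$-forbiddance the same geometric argument gives only the weaker bound of at most $t-1$ neighbours, because otherwise $\{v\}\cup\ON(\vec{x})$ together with $t$ such neighbours would form a $K_{3,t}$. So the estimate of Eq.~(\ref{eq:three}) has to be redone as $|\VB(\vec{x})\setminus\Neiclosed{G}{\bfI_p\cup\bfI_q}| \ge \beta - t\cdot 2k = \Ram(k,3+t)$, which is precisely why $\beta$ was chosen as above; Proposition~\ref{prop:free} then still supplies the buffer $\bfIB$ of size $k$ and the rest of the argument (the rerouting via $\bfIB$ and the verification that each intermediate set is independent) is unchanged.

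The final bookkeeping in Step~4 also has to be redone. Vectors $\vec{x}$ with $|\ON(\vec{x})|\ge 3$ again contribute at most $t-1$ vertices each to $\VN$ (by the same $K_{3,t}$ argument via Proposition~\ref{prop:K}), and vectors with $1\le |\ON(\vec{x})|\le 2$ contribute at most $\beta$ each, while the number of vectors is at most $2^{2k+\alpha}$. Combining these yields $|V|\le f_t(k)$ for some explicit function depending only on $k$ (with $t$ fixed), and Lemma~\ref{lem:brute} then decides the remaining instance in $O(f_t(k)^{2k})$ time, for a total running time of $O(|E|+f_t(k)^{2k})$. The main---and essentially only---obstacle is tracking the $t$-dependence through the constants and confirming that the structural arguments of Section~\ref{subsec:planar} remain valid once $\alpha$ and $\beta$ are inflated from $O(k)$ to roughly $\Ram(k,3+t)$; no new algorithmic idea is required.
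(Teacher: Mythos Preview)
Your proposal is correct and follows the same approach as the paper: replace the four-colour theorem by the Ramsey bound of Proposition~\ref{prop:free}, upgrade the ``at most two neighbours in $\Nvec{\vec{x}}$'' bound to ``at most $t-1$'' via $K_{3,t}$-forbiddance, and recompute the constants in each step. The only difference is cosmetic: the paper sets $\beta_t = \Ram((2t+1)k, t+3)$ and first extracts a large independent set from $\VB(\vec{x})$ before discarding the $\le 2tk$ vertices in $\Neiclosed{G}{\bfI_p\cup\bfI_q}$, whereas you set $\beta = \Ram(k, t+3) + 2tk$ and discard first (mirroring the planar proof) before applying Proposition~\ref{prop:free}---both orderings work and yield fixed-parameter size.
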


	We here give a sketch of how to adapt the fixed-parameter algorithm for planar graphs in Section~\ref{subsec:planar} to $K_{3,t}$-forbidden graphs. 
	
	The first point is to set two parameters $\alpha_t$ and $\beta_t$ that correspond to $\alpha$ and  $\beta$, respectively. 
	Recall that $\alpha = 4k$ and $\beta = \valbeta$ are the orders of (sub)graphs that guarantee the existence of sufficiently large independent sets that will be used as a buffer space. 
	For $K_{3,t}$-forbidden graphs, we set $\alpha_t = \Ram(k, t+3)$ and $\beta_t = \Ram((2t+1)k, t+3)$.
	Then, Proposition~\ref{prop:free} guarantees the existence of independent sets of size $k$ in $\VR$, and hence Step~1 of Algorithm~\ref{alg} can be adapted to $K_{3,t}$-forbidden graphs. 
	We note that, although no exact formula of Ramsey number is known, we can bound it from above, say $\Ram(a,b)\le {a+b-2 \choose b-1}$~\cite{Graham:1990:RT:85271}. 
	Therefore, we indeed set $\alpha_t = (k+t+1)^{t+2}$ and  $\beta_t = \bigl((2t+1)k+t+1 \bigr)^{t+2}$, both of which are fixed-parameter size.

	The second point is to extend Lemma~\ref{lem:atmosttwo} for planar graphs to that for $K_{3,t}$-forbidden graphs, as follows. 
	(Note that $\beta_t \ge t$.)
	\begin{lemma} \label{lem:atmosttwoK3t}
	If $|\Nvec{\vec{x}}| \ge \beta_t$ holds for a vector $\vec{x} \in \{0,1\}^{V\setminus \VN}$, then $|\ON(\vec{x})| \le 2$. 
	\end{lemma}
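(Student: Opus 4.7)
The plan is to mirror the proof of Lemma~\ref{lem:atmosttwo} almost verbatim, replacing the forbidden subgraph $K_{3,3}$ by $K_{3,t}$ and adjusting the threshold on $|\Nvec{\vec{x}}|$ accordingly. I would argue by contradiction: assume $|\ON(\vec{x})|\ge 3$ while $|\Nvec{\vec{x}}|\ge \beta_t$, and produce a $K_{3,t}$ subgraph of $G$, contradicting the hypothesis that $G$ is $K_{3,t}$-forbidden.

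The key ingredient is Proposition~\ref{prop:K}, which already tells us that $G[\Nvec{\vec{x}}\cup \ON(\vec{x})]$ contains a complete bipartite graph $K_{|\Nvec{\vec{x}}|,|\ON(\vec{x})|}$ with parts $\Nvec{\vec{x}}$ and $\ON(\vec{x})$. Under the assumption $|\ON(\vec{x})|\ge 3$ we can pick any three vertices of $\ON(\vec{x})$ to form one side of the desired $K_{3,t}$. For the other side, I need to select $t$ vertices from $\Nvec{\vec{x}}$, which is possible precisely because the chosen parameter satisfies $\beta_t\ge t$ (this was recorded in the parenthetical remark right before the lemma statement, and it follows immediately from $\beta_t=\Ram((2t+1)k,t+3)\ge t$ for any $k\ge 1$ and $t\ge 3$). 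The induced complete bipartite subgraph on these $3+t$ vertices is a $K_{3,t}$ subgraph of $G$, which is the desired contradiction.

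I do not expect any genuine obstacle here, since the whole argument is a one-line use of Proposition~\ref{prop:K} combined with the $K_{3,t}$-forbiddance of $G$; the only thing that needed adjusting from the planar setting is the numerical threshold, and this is baked into the choice of $\beta_t$. The proof should take just a few lines of LaTeX and not require any further lemmas.
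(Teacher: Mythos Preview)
Your proposal is correct and follows exactly the approach the paper intends: the paper does not spell out a proof for this lemma (nor for Lemma~\ref{lem:atmosttwo}), but the parenthetical remark ``Note that $\beta_t\ge t$'' just before the statement signals precisely the argument you give, namely combining Proposition~\ref{prop:K} with the $K_{3,t}$-forbiddance of $G$ to derive a contradiction from $|\ON(\vec{x})|\ge 3$ and $|\Nvec{\vec{x}}|\ge\beta_t\ge t$.
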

	Then, since there is an independent set of size at least $(2t+1)k$ in $\Nvec{\vec{x}}$, Step~2 of Algorithm~\ref{alg} can be adapted to $K_{3,t}$-forbidden graphs. 
	
	The third point is to modify Eq.~(\ref{eq:three}) in the proof of Lemma~\ref{lem:shrinkreconf} and shrink $\Nvec{\vec{x}}$ to size $\beta_t$. 
	Recall that the vertices in $\ON(\vec{x})$ and $\Nvec{\vec{x}}$ form a complete bipartite graph $K_{2, |\Nvec{\vec{x}}|}$, and hence any vertex other than $\ON(\vec{x})$ can be adjacent with at most $(t-1)$ vertices in $\Nvec{\vec{x}}$, due to the $K_{3,t}$-forbiddance of the graph. 
	Therefore, if $\Nvec{\vec{x}}$ has an independent set of size at least $(2t+1)k$, we still have at least $k$ vertices that can be used as a buffer space. 
	Therefore, Lemma~\ref{lem:shrinkreconf} can be adapted to $K_{3,t}$-forbidden graphs, and hence Step~3 of Algorithm~\ref{alg} can be, too.
	
	The running time of the adapted algorithm depends on the order of the graph shrunk by Step~3.
	By the similar arguments for planar graphs, the order of the shrunk graph depends only on $\alpha_t$ and $\beta_t$. 
	Since both $\alpha_t$ and $\beta_t$ are fixed-parameter size, the adapted algorithm runs in fixed-parameter running time. 
	
	This completes the proof of Theorem~\ref{the:extension}.

\section{Shortest Reconfiguration Sequence}

	In the previous section, we present an algorithm which simply determines if there exists a reconfiguration sequence between two given independent sets $\bfI_0$ and $\bfI_r$. 
	If the answer is yes, it is natural to consider how we actually move tokens on $\bfI_0$ to $\bfI_r$. 
	For this question, it is easy to modify Algorithm~\ref{alg} to output an actual reconfiguration sequence, although it is not always shortest.
%
\if0
	Note that, in Steps~1 and 2 of Algorithm~\ref{alg}, we utilize the existence of a sufficiently large independent set, but if we want to find a reconfiguration sequence, we need to find a concrete independent set of a certain size. 
	Namely, we need to find
(1) an independent set of size $k$ from a graph $G[\VR]$ of order at least $\alpha$ at Step~1; and 
(2) an independent set of size $\valcli$ (or size $(t-1)(2k-1)+k$ for $K_{3,t}$-free graphs) from a graph $G[\Nvec{\vec{x}}]$ of order at least $\beta$ at Step~2. 
	Notice that we can use any independent set as a buffer space, as long as the size requirement is satisfied.
	Therefore, we choose an arbitrary vertex subset of size $\alpha$ from $\VR$ (or size $\beta$ from $\Nvec{\vec{x}}$) and find independent sets in a brute-force manner. 
	These require simply $O \bigl({\alpha \choose k} \bigr)$ time and $O \bigl({\beta \choose \valcli} \bigr)$ time (or $O \bigl({\beta_t \choose (t-1)(2k-1)+k} \bigr)$ time for $K_{3,t}$-free graphs), which are still fixed-parameter running time. 
	Actually, for planar graphs, $O(\alpha^2)$ time  and $O(\beta^2)$ time are sufficient by Proposition \ref{prop:4color}.  
\fi
	In this section, we consider how to move tokens on $I_0$ to $I_r$ in a shortest way. 
	\begin{theorem} \label{the:shortest}
	For a fixed integer $t \ge 3$, let $G$ be a $K_{3,t}$-forbidden graph. 
	Given a yes-instance of {\sc token jumping} on $G$, a shortest reconfiguration sequence can be found in fixed-parameter running time, where the parameter is the number $k$ of tokens.
	\end{theorem}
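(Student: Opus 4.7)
The plan is to modify Algorithm~\ref{alg} so that it never exits early, shrinks the graph down to a size bounded by a function of $k$ and $t$ in a way that preserves the \emph{length} of a shortest reconfiguration sequence (and not merely its existence), and then performs a breadth-first search on the configuration graph of the shrunk graph to read off a shortest sequence. Since the shrunk graph will have order bounded by some fixed-parameter function $g(k,t)$, the BFS runs in time $O\bigl((g(k,t))^{2k}\bigr)$ as in Lemma~\ref{lem:brute}; the real work is to show that the shrinking can be carried out without destroying the optimum.

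To this end I would strengthen Lemma~\ref{lem:shrinkreconf} from ``existence-preserving'' to ``length-preserving'', using a larger buffer size $\beta^{*}_t$ (still fixed-parameter in $k$ and $t$) and performing the substitution \emph{vertex by vertex} rather than by the ``route through $\bfIB$'' detour used in the original proof. Concretely: given a shortest reconfiguration sequence $\langle \bfI_0, \bfI_1, \ldots, \bfI_\ell \rangle$ in $G$, I would trace each token's trajectory through the sequence and, for each maximal contiguous visit of a token to $\Nvec{\vec{x}}$, commit to a single substitute $w'\in \VB(\vec{x})$ that replaces the visited vertex throughout the visit. This preserves the length exactly: a ``stay'' in the original sequence remains a ``stay'' in the substituted one, and a single-token move remains a single-token move. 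The $K_{3,t}$-forbiddance bounds the number of bad substitute candidates: any vertex outside $\ON(\vec{x})$ has at most $t-1$ neighbors in $\Nvec{\vec{x}}$ (else it would complete a $K_{3,t}$ together with $\ON(\vec{x})$), and since at most $k$ tokens are ever simultaneously present, the total number of forbidden vertices in $\VB(\vec{x})$ at any step is $O(kt)$. Taking $\beta^{*}_t$ strictly larger than this bound plus the number of substitutes already committed to other tokens guarantees a valid $w'$ always exists.

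Without the early exits of Steps~1 and 2, the region $\VR$ may also be large and has to be shrunk. I plan to apply the same idea to $\VR$: partition $\VR$ according to each vertex's adjacency pattern into the already-shrunk $\VN$, keep $\beta^{*}_t$ representatives per class, and justify the shrink by the same $K_{3,t}$-forbidden substitution argument. After both shrinkings, the graph has order $g(k,t)$ and the BFS above completes the algorithm.

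The hard part will be verifying that the vertex-by-vertex substitution always produces a \emph{valid} independent set at every intermediate step. A substitute $w'$ must be non-adjacent not only to tokens outside $\Nvec{\vec{x}}$ (easy by the $K_{3,t}$ counting bound) but also to substitutes chosen for other tokens whose visits to $\Nvec{\vec{x}}$ overlap in time, and it must remain valid throughout the entire visit interval as other tokens elsewhere in the graph move around. The delicate bookkeeping is in ordering the substitution choices---say, by the time each visit begins---so that at each choice step the forbidden set is bounded by $k(t-1)$ plus the at most $k$ substitutes already committed, both of which are fixed-parameter in $k$ and $t$ and hence fit inside $\VB(\vec{x})$ once $\beta^{*}_t$ is taken suitably large.
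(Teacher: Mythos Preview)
Your high-level plan is exactly the paper's own proof sketch: suppress the early exits at Steps~1 and~2, shrink both $\VR$ and every oversized $\Nvec{\vec{x}}$ to fixed-parameter size while preserving the \emph{length} of a shortest sequence, and then run the BFS of Lemma~\ref{lem:brute} on the shrunk instance. The paper omits all details of the length-preserving shrink, so your vertex-by-vertex substitution is already more explicit than what appears there.

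There is, however, a concrete gap in your substitution argument. The key bound ``any vertex outside $\ON(\vec{x})$ has at most $t-1$ neighbors in $\Nvec{\vec{x}}$'' needs $|\ON(\vec{x})| = 2$: only then does a third vertex with $t$ neighbors in $\Nvec{\vec{x}}$ complete a forbidden $K_{3,t}$. In the existence algorithm this is harmless, because Step~3 is reached only when $|\ON(\vec{x}) \cap \bfI_\ini| = 2$ or $|\ON(\vec{x}) \cap \bfI_\tar| = 2$, which forces $|\ON(\vec{x})| = 2$. But once you suppress the early exit at Step~2 you must also shrink $\Nvec{\vec{x}}$ in the Step-2 case, where $|\ON(\vec{x})|$ can equal~$1$; and in your proposed shrink of $\VR$, a class (partitioned by adjacency into the shrunk $\VN$) can have $|\ON(\vec{y})| \in \{0,1\}$, since nothing forces a vertex of $\VR$ to have even one neighbor in $\VN$. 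In those low-$|\ON|$ cases a single vertex $v \notin \ON(\vec{x})$ may be adjacent to arbitrarily many vertices of $\Nvec{\vec{x}}$ without creating any $K_{3,t}$, so your $k(t-1)$ forbidden-set bound collapses and a retained set $\VB(\vec{x})$ of any prescribed size need not contain a valid substitute. You will need a separate device here---for instance, a further refinement of the partition, or an argument exploiting that precisely in the Step-1 and Step-2 cases a sequence of length at most $2k$ is already known to exist (by Lemmas~\ref{lem:buffer} and~\ref{lem:step2}), which caps the total number of distinct token positions at $3k$ and may allow a second round of the $K_{3,t}$ counting.

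A smaller point: your substitute $w'$ must remain valid for the entire visit interval, not just at the moment you choose it, so the per-step bound $k(t-1)$ does not suffice as stated; you should make the global $2k$ length bound (and hence the $3k$ bound on distinct token positions) explicit to control the forbidden set over the whole visit.
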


\noindent
{\em Proof sketch.}
	We explain how to modify Algorithm~\ref{alg} so as to find a shortest reconfiguration sequence. 
	The biggest change from Algorithm~\ref{alg} is that the modified algorithm does not stop until Step~4. 
	Algorithm~\ref{alg} can exit at Steps~1 and 2 after finding a buffer space, which means that there exists a reconfiguration sequence from $\bfI_0$ to $\bfI_r$ via vertices only in $\VR$ and vertices only in $\Nvec{\vec{x}}$, respectively. 
	However, 
this does not directly imply the existence of a {\em shortest} reconfiguration sequence from $\bfI_0$ to $\bfI_r$ that uses vertices only in $\VR$ (or only in $\Nvec{\vec{x}}$). 
	Thus, we do not exit at Steps~1 and 2, but shrink $\VR$ and $\Nvec{\vec{x}}$ of the original graph into a fixed-parameter size so as to preserve the shortest length of a reconfiguration sequence in the original graph;
then we can find a shortest reconfiguration sequence in Step~4 by the brute-force algorithm proposed in Lemma~\ref{lem:brute}.  
(Details are omitted from this extended abstract.)
	\qed



	\section*{Acknowledgments}
	We are grateful to Akira Suzuki, Ryuhei Uehara, and Katsuhisa Yamanaka for fruitful discussions with them. 
	We thank anonymous referees for their helpful suggestions.
	This work is partially supported by JSPS KAKENHI 25106504 and 25330003 (T.~Ito), and 25104521 and 26540005 (H.~Ono), and by the (Polish) National Science Center under grant no.  2013/09/B/ST6/03136 (M. Kami\'nski).

\bibliographystyle{abbrv}

\end{document}